\newtheorem{defi}{Definition}
\newtheorem{theorem}{Theorem}[section]
\newcommand{\mxfifo}{\ensuremath{\mathit{fifo}}}
\newcommand{\mxatr}{\ensuremath{\mathit{atr}}}
\newcommand{\mxitr}{\ensuremath{\mathit{itr}}}
\newcommand{\mxfiforate}{\ensuremath{\mathit{fiforate}}}
\newcommand{\mxtokrate}{\ensuremath{\mathit{tokrate}}}
\newcommand{\mxfalse}{\ensuremath{\mathit{false}}}
\newcommand{\mxtrue}{\ensuremath{\mathit{true}}}
\newcommand{\mxbtoi}{\ensuremath{\mathit{BTOI}}}
\newcommand{\mxparent}{\ensuremath{\mathit{parent}}}
\newcommand{\mxcontrol}{\ensuremath{\mathit{cport}}}
\newcommand{\mxdelay}{\ensuremath{\mathit{delay}}}
\newcommand{\mxdrps}{\ensuremath{\mathit{drps}}}
\newcommand\copyrighttext{%
  \footnotesize \textcopyright 2017 IEEE. Personal use is permitted. For any other purposes, permission must be obtained from the IEEE by emailing pubs-permissions@ieee.org.}
\newcommand\copyrightnotice{%
\begin{tikzpicture}[remember picture,overlay]
\node[anchor=south,yshift=10pt] at (current page.south) {\fbox{\parbox{\dimexpr\textwidth-\fboxsep-\fboxrule\relax}{\copyrighttext}}};
\end{tikzpicture}%
}
\begin{document}

\title{PRUNE: Dynamic and Decidable Dataflow for Signal Processing on Heterogeneous Platforms}

\author{Jani~Boutellier,~\IEEEmembership{Member,~IEEE,}
        Jiahao Wu,~\IEEEmembership{Member,~IEEE,}
Heikki Huttunen,~\IEEEmembership{Member,~IEEE,}        Shuvra S. Bhattacharyya,~\IEEEmembership{Fellow,~IEEE}
\thanks{J. Boutellier is with the Laboratory
of Pervasive Computing, Tampere University of Technology, Finland,
e-mail: jani.boutellier@tut.fi.}
\thanks{J. Wu is with the Department of Electrical and Computer Engineering, University of Maryland, USA, e-mail: jiahao@terpmail.umd.edu.}
\thanks{H. Huttunen is with the Laboratory of Signal Processing, Tampere University of Technology, Finland, e-mail: heikki.huttunen@tut.fi.}
\thanks{S. S. Bhattacharyya is with the Department of Electrical and Computer Engineering, University of Maryland, USA, and the Laboratory of Pervasive Computing, Tampere University of Technology, Finland, email: ssb@umd.edu.}
\thanks{Manuscript received ----- xx, 2017; revised ----- xx, 2017.}}

%
%

\markboth{Transactions on Signal Processing,~Vol.~--, No.~--, -----~----}%
{Shell \MakeLowercase{\textit{et al.}}: Bare Demo of IEEEtran.cls for Journals}
%



\maketitle
\copyrightnotice
\begin{abstract}
The majority of contemporary mobile devices and personal computers are based on heterogeneous computing platforms that consist of a number of CPU cores and one or more Graphics Processing Units (GPUs). Despite the high volume of these devices, there are few existing programming frameworks that target full and simultaneous utilization of all CPU and GPU devices of the platform.

This article presents a dataflow-flavored Model of Computation (MoC) that has been developed for deploying signal processing applications to heterogeneous platforms. The presented MoC is dynamic and allows describing applications with data dependent run-time behavior. On top of the MoC, formal design rules are presented that enable application descriptions to be simultaneously dynamic and decidable. Decidability guarantees compile-time application analyzability for deadlock freedom and bounded memory.

The presented MoC and the design rules are realized in a novel Open Source programming environment ``PRUNE'' and demonstrated with representative application examples from the domains of image processing, computer vision and wireless communications. Experimental results show that the proposed approach outperforms the state-of-the-art in analyzability, flexibility and performance.
\end{abstract}

\begin{IEEEkeywords}
Dataflow computing, design automation, signal processing, parallel processing
\end{IEEEkeywords}

%
\IEEEpeerreviewmaketitle

\section{Introduction}

\IEEEPARstart{A}{dvances} in signal processing have enabled new technologies that greatly affect our everyday lives. Progress in wireless communications, video coding, and recently, computer vision, has provided us previously impossible applications. However, simultaneously, the signal processing behind MIMO radios, H.265 video coding and Convolutional Neural Networks has reached considerable computational complexity, even though these applications are often executed on performance-constrained mobile devices.

Enabling real-time performance for such signal processing algorithms often means resorting to computation acceleration by fixed-function ASICs (Application Specific Integrated Circuit) or by programmable accelerators such as GPUs. However, due to strict design time requirements of the industry, programmable accelerators are becoming increasingly popular compared to ASICs.

The efficiency of programmable computation accelerators is based on the fact that their architectures have been tuned to accelerate specific classes of algorithms. Unfortunately, this means that accelerators are only suitable for executing certain parts of application code, leaving the rest of the execution burden to
the general purpose CPU cores of the computation platform. Hence, both general purpose CPU cores and accelerators have a significant  role in the total application performance. To this extent, tapping the full performance potential of a heterogeneous computing platform requires a programming approach that can efficiently and simultaneously use all available CPU cores and accelerators.

The mainstream approach for programming the most popular compute accelerators of today, GPUs, involves the C-like languages CUDA from NVidia and OpenCL by Khronos. Whereas the former is intended for offloading computations to NVidia GPU devices, the latter provides a common Application Programming Interface (API) for both CPU cores and GPUs. Unfortunately, the OpenCL API operates on a low level and requires the programmer to take care of synchronization and memory transfers between devices, which is tedious and requires specialized expertise.

As it has been observed in many previous works \cite{Plishker08,Bilsen96,Thies02,Lee95}, dataflow Models of Computation provide a remarkably suitable abstraction for signal processing algorithms. Previous work \cite{Schor13} has also shown that programming frameworks based on the dataflow abstraction allow the application programmer to concentrate on developing the application, as concurrency and memory related low-level tasks are managed by the dataflow abstraction and by the programming framework.

This article describes a dataflow-flavored Model of Computation  that
\begin{itemize}
	\item captures the functionality of \textit{data dependent} signal processing algorithms,
    \item enables design time analysis for deadlock freedom and bounded memory use (decidability) through formal design rules, and
    \item provides a basis for efficient concurrent computation on heterogeneous platforms.
\end{itemize}
The MoC presented in this article has been published \cite{Boutellier2016} recently, and in this article the MoC is complemented with design rules that enable decidability analysis.

Based on the MoC, the article describes a novel Linux-based Open Source\footnote{https://gitlab.com/jboutell/Prune} programming framework PRUNE (PSM Runtime Environment) targeted for high-performance signal processing applications. PRUNE 
\begin{enumerate}
	\item implements application consistency analysis,
    \item provides an efficient runtime memory- and concurrency management framework for heterogeneous platforms, 
    \item presents a compile-time translator that allows importing programs from previous similar run-time frameworks.
\end{enumerate}
Out of these, items 1) and 3) are novel compared to \cite{Boutellier2016}.

To the best knowledge of the authors, the PRUNE dataflow framework is the first to simultaneously provide a) flexibility for describing signal processing applications with data-dependent token rates, b) a decidable Model of Computation, and c) experimental results that demonstrate high performance.

\section{Background}
\label{sec:background}

In the dataflow abstraction \cite{Lee87}, an application is described as a \textit{graph} that consists of \textit{actors} (nodes) and communication \textit{channels} (edges). Actors perform computations on data that is quantized into \textit{tokens}. Actors acquire tokens from
their input \textit{ports} and produce computation results to their output
ports. Token communication between actors is handled by order-preserving FIFO (First-In-First-Out) channels that are attached to actor ports. A dataflow actor performs
a computation by \textit{firing}, which can include consuming tokens from input
ports, and producing tokens to the actor output
ports. A central feature of the dataflow abstraction
is that computations are triggered by the availability of data, in contrast to,
for example, time-triggered abstractions \cite{Henzinger01}.

In the literature, a wide variety of dataflow Models of Computation (MoC) has been presented. One of the most important factors that differentiates a dataflow MoC from another concerns the token communication rates ({\em dataflow rates}) --- that is, the rates at which an actor reads from or writes to the channels that are connected to it. In this sense, the most restricted dataflow MoC is \textit{homogeneous synchronous dataflow} (HSDF) \cite{Lee87}, where for each actor the token rate of each input port and each output port must be exactly one.
\textit{Synchronous dataflow} (SDF) \cite{Lee87} is more expressive as it allows token rates larger than one, as is \textit{cyclo-static dataflow} (CSDF) \cite{Bilsen96}, which goes beyond SDF by allowing tokens rates to vary in repetitive cycles.

The aforementioned MoCs (HSDF, SDF, CSDF) are restricted in the sense that they
disallow \textit{data dependent} changes to the token rates, which is a
required feature as, for example, video decoders \cite{Mattavelli10} and
Software Defined Radio applications \cite{Berg08} introduce behavior that
cannot be captured by static token rates. To achieve this, \textit{dynamic
dataflow} MoCs are required. Examples of dynamic dataflow MoCs are
\textit{Boolean dataflow} (BDF) \cite{Buck93}, \textit{enable-invoke dataflow}
(EIDF) \cite{Plishker08} and \textit{dataflow process networks} (DPN)
\cite{Lee95}. Dynamic dataflow MoCs allow port token rates to change based on values of input tokens. Some formulations \cite{Lee09, Tretter15} also allow interpreting Kahn process networks (KPN) \cite{Kahn74} as a kind of a dynamic dataflow MoC.

The BDF MoC is related in some ways to that of PRUNE, however PRUNE and its design rules impose some additional restrictions that make it more analyzable than BDF. For example, PRUNE requires that (descriptions of SWITCH and SELECT can be found in \cite{Buck93})
\begin{enumerate}
    \item each SWITCH type actor needs to have a corresponding SELECT actor, unlike BDF; 
	\item data streams that control a pair of SWITCH and SELECT actors need to be identical, which is not the case in BDF; 
    \item the input token rate and output token rate on each end of a FIFO need to be identical. 
\end{enumerate}
These three differences are related to minimal examples presented in \cite{Buck93} that break either the \textit{strong consistency} or \textit{bounded memory} assumptions in BDF and preclude decidability.

\subsection{Analyzability of Dynamic Dataflow MoCs}

The disadvantage of dynamic dataflow MoCs is their limited analyzability. A
promising approach to provide analyzability and structure to dynamic behavior
is parameterization, which restricts the allowed dynamic application behavior
to a certain extent. An example of a well-known MoC belonging to this class is
Parameterized Synchronous Dataflow (PSDF) \cite{bhattacharya2001parameterized}. 

Parameterization, however, does not imply any guarantee on decidability, which means compile-time application analyzability for deadlock freedom and bounded memory. A dataflow MoC that is both decidable and dynamic is Scenario Aware Dataflow (SADF) \cite{stuijk2011scenario}, where each operation mode (scenario) of an application is expressed as a separate SDF graph. Unfortunately, in some situations, such as when there is an arbitrary
sample rate change in a signal processing application, the use of the SADF
model yields an unwieldy number of scenarios.
\textit{Parameterized sets of modes} ({\em PSMs})~\cite{Lin2015} is a 
modeling approach that addresses this shortcoming via \textit{mode sets} that enable compact management of changes both in dataflow graph topology and sample rates.
Intuitively, a PSM is a modeling abstraction that groups together a collection of related operating modes, where one or more parameters are used to select a unique mode from the collection at compile-time or run time. For details on the PSM modeling approach, we refer the reader to~\cite{Lin2015}.

A different approach for providing decidability to dynamic dataflow is presented by Gao et al. in \textit{well-behaved dataflow} (WBDF) \cite{gao1992well}: the use of dynamic actors is restricted to pre-defined actor patterns that have been shown to provide decidability. Advantages of this approach are that it can be adapted to various dataflow MoCs, and extended with new patterns, as needed.

The PRUNE MoC presented in Section~\ref{sec:proposed_model} is accompanied with \textit{design rules} (Section~\ref{sec:rules}) in the spirit of WBDF, in order to formulate necessary conditions for guaranteeing decidability, while still maintaining
support for dynamic application behavior. The PRUNE MoC also draws from the PSM concepts for compact representation of token rate changes. PRUNE applications undergo a consistency analysis at compile time (Section~\ref{sec:designtime}),
and are executed under an efficient runtime system (Section~\ref{sec:runtime}) that targets heterogeneous platforms.

\begin{table}
\caption{Comparison to related dataflow models and languages.}
\label{table:previous}
\begin{tabular}{p{2.5cm}p{1.2cm}p{1.2cm}p{2.2cm}}
\hline\noalign{\smallskip}
Work & Decidable & Dynamic & High-performance \\
\hline 
SADF \cite{stuijk2011scenario} & + & + & ? \\
\hline 
BDF \cite{Buck93} & - & + & ? \\
\hline 
DAL \cite{Schor13} & - & +(-) & + \\
\hline 
RVC-CAL/Orcc \cite{Yviquel13} & - & + & + \\
\hline 
StreamIt \cite{Hyunh14} & + & - & + \\
\hline 
PRUNE & + & + & + \\
\hline 
\noalign{\smallskip}
\end{tabular}
\end{table}

\subsection{Related Programming Frameworks}
\label{sec:related}

A number of programming frameworks that target heterogeneous platforms have emerged in the last several years. The frameworks described in \cite{Boulos14}, \cite{Sbirlea12} and \cite{Gautier13} represent \textit{task-based} programming approaches, where tasks are spawned, executed and finished, and their interdependencies are expressed as a directed acyclic graph. The proposed approach, in contrast, is based on actors that are created once at initialization and run as independent entities, communicating with each other until termination of the application.

Concerning actor based programming frameworks, a recent article \cite{Hyunh14} presents a framework that enables deploying applications written in the StreamIt language \cite{Thies02} to GPUs. Compared to this work, the significant difference is that the StreamIt language heeds the SDF MoC, which does not allow run-time changes in token rates. The same token rate restriction applies to two recent works \cite{Lund15, Boutellier15S} that discuss deployment of RVC-CAL dataflow programs to heterogeneous architectures.

The DAL framework \cite{Schor12} is based on Kahn process networks and also has an extension \cite{Schor13} for targeting heterogeneous systems with OpenCL enabled devices. In terms of OpenCL / GPU acceleration, this framework is limited to the SDF MoC, which disallows dynamic token rates.

Representative previous works are compared to the proposed PRUNE framework in Table~\ref{table:previous} in terms of decidability, support for dynamic token rates, and experimentally demonstrated high performance on heterogeneous platforms.

\section{Proposed Model of Computation}
\label{sec:proposed_model}

A common feature in signal processing oriented MoCs is the use
of semantic restrictions that a) enhance the potential
for application analysis and optimization, while b) being compatible
with specialized classes of signal processing applications. 
The PRUNE MoC has been designed for capturing the behavior of high-performance signal processing applications that can be viewed as
having configurable-topology, symmetric-rate dataflow behavior.

Here by {\em symmetric-rate dataflow}, we mean a restricted
form of SDF in which the token production rate is equal to the consumption rate
on every FIFO channel.  
However, the PRUNE MoC is
significantly more flexible than SDF in that the
connections between actors (graph topology) can be changed at run-time.
At the same time, the design rules (Section~\ref{sec:rules}) that are imposed
on the construction of PRUNE graphs ensure that important
decidability properties are maintained,
including analyzability for bounded memory and deadlock-free operation.

\subsection{Connections of a PRUNE Graph }

In the PRUNE MoC, an application is described as a graph $G =
(A, F)$, where $A$ is a set of actors and $F$ is a set of FIFO communication
channels that interconnect the actors. Each actor $a \in A$ may have zero or
more input ports and zero or more output ports. If an actor
$a$ has no input ports it is called a \textit{source actor}, and if it has
no output ports it is called a \textit{sink actor}. If an actor $a$
contains port $p$, we say that $a$ is the {\em parent} of $p$,
denoted $\mxparent(p)$. When needed for clarity, we denote with a superscript +/- the output/input direction of a port, and with a subscript number/letter the index/parent of a port. For example, $p_{a1}^+$ denotes output port \#1 of actor $a$.

Each FIFO $f \in F$ is connected to an output port $p^+$ of some actor
$\mxparent(p^+)$, and to an input port $p^-$ of some actor
$\mxparent(p^-)$.  The ports $p^+$ and $p^-$ are referred to, respectively, as
the {\em source port} and {\em sink port} of $f$. We say that the ports $p^-$ and $p^+$ are {\em connected} when
they are source and sink ports of the same FIFO --- that is, when $\mxfifo(p^-)$ = $\mxfifo(p^+)$.

A given output port $p^{+}$ can be connected to multiple FIFOs. However, each FIFO
has a unique source port and sink port, and each input port $p^{-}$ has a unique FIFO
that connects to it. When a source port $p^+$ is connected to multiple FIFOs, and
$\mxparent(p^+)$ writes a token through $p^+$, the token is written (i.e.,
broadcasted) into all of the FIFOs connected to $p^+$. 

Each port $p$ has a type that is either a \textit{control input port} (``control port''), a {\em static regular
port} ({\em SRP}) or a {\em dynamic regular port} ({\em DRP}). 
SRPs have a single, fixed, positive token consumption rate (for input ports), or token production rate (for output ports). DRPs, in contrast, have two fixed token rates that are referred to as the
{\em active token rate} (\mxatr) of $p$, and denoted as $\mxatr(p)$, and the {\em inactive token rate} (\mxitr), which is always zero. The consumption rate of a control port is always equal to unity.

A distinguishing aspect of the PRUNE MoC
is that each FIFO has a single, positive-integer token rate, denoted
by $\mxfiforate(f)$, that is associated with it.
In conjunction with this association of token
rates with FIFOs, the PRUNE MoC imposes the restriction
for each port $p$, $\mxatr(p) = \mxfiforate(\mxfifo(p))$.
In other words, it is a semantic error to have
a port that is connected to a FIFO if there is a mismatch between 
the $\mxatr$ of the port and the token rate of the FIFO.

Similar to KPN \cite{Kahn74}, in the PRUNE MoC, blocking reads and blocking writes are assumed for all actors. Under blocking a read, the execution of an actor stalls if it has an input port $p$ such that the number of tokens in $\mathit{fifo}(p)$ is less than the number of tokens in $\mxfiforate(p)$. The same holds for output FIFOs and their free token slots.

\subsection{Types of Actors}

Computations related to a PRUNE application are performed in firings of actors. A firing $\phi$ of actor $a$ consumes tokens from the input ports (FIFOs) of $a$ and produces tokens to the output ports (FIFOs) of $a$. In the PRUNE MoC, each actor has a type that is either \textit{static processing actor}, \textit{dynamic actor}, or \textit{configuration actor}. The allowed port types and their associated firing behavior depend on the type of the actor. 

A dynamic actor $x$ has at least one DRP, any number of SRPs, and a unique control input port, denoted $\mxcontrol(x)$.
When dynamic actor $x$ performs a firing $\phi$, a \textit{control token} is consumed from $\mxcontrol(x)$. The control token sets the token rate for each DRP $p$ of $x$ to $\mxatr(p)$ or $\mxitr(p)$ for the duration of $\phi$.
By definition, the token rate of each SRP of $x$ remains at its fixed positive token rate regardless of the value of the corresponding control token. 

\begin{figure}
\centering
\includegraphics[width=0.88\linewidth]{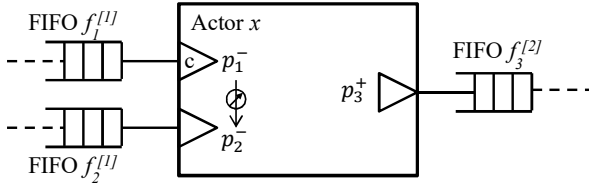}
\caption{An illustration of a dynamic actor in the PRUNE MoC.}
\label{fig:actor}
\end{figure}

Fig.~\ref{fig:actor} depicts an example of a dynamic actor. This actor,
denoted $x$, is connected to three FIFO channels, $f_1$, $f_2$ and $f_3$,
through its ports $p_1^{-}$, $p_2^{-}$ and $p_3^{+}$. FIFOs
$f_1$, $f_2$ and $f_3$ have token rates of 1, 1, and 2 (shown in brackets), respectively. The
annotation of port $p_1^{-}$ with ``c'' indicates that this is the
control port of the dynamic actor. Port $p_2^{-}$ is a DRP, while
$p_3^{+}$ is an SRP. Values of the tokens consumed from the control port
set the token rate of input port $p_2^{-}$ to either 0 or 1; in other
words, $\mxatr(p_2^{-}) = 1$ and $\mxitr(p_2^{-}) = 0$.

A configuration actor has one or more {\em control output ports}. A control output port of a configuration actor
must be an SRP, have a token production rate of unity, and
be connected to the control input port of a dynamic actor.
Thus, the control tokens consumed by control input ports are always
produced by configuration actors: if $p_x^-$ is a control input port of a dynamic actor $x =
\mxparent(p_x^-)$, then there is a unique control output port $p_q^+$ of a
configuration actor $q = \mxparent(p_q^+)$ such that $p_x^-$ is connected to
$p_q^+$. In addition to its one or more control output ports, a configuration actor has zero or more {\em data ports} of type SRP. A data port can be either an input port or output port.

Finally, all ports of a static processing actor $a$ are of the type SRP and hence active during all firings of $a$. Thus, for all firings $\phi$ of a static processing actor $a$, and all ports $p$ of $a$, $\mxtokrate(p, \phi) = \mxatr(p)$, where $\mxtokrate(p, \phi)$
denotes the token rate of a port $p$ during a firing $\phi$ of $\mxparent(p)$.

\subsection{Control and Firing of a Dynamic Actor}

Since each control output port $p_q^+$ of a configuration actor $q$ is required to connect to a control input port $p_x^-$ of a dynamic actor $x$, and in turn $p_x^-$ sets the token rate of each DRP of $x$, we say that the port $p_q^+$ {\em controls} the DRPs of $x$. 
This control relationship can be represented as part of a data structure called the {\em control table}. The control table for a  PRUNE graph or subgraph $G$ is a matrix whose rows are indexed by control output ports, and columns are indexed by DRPs. The size of the matrix is $h \times w$, where $h$ and $w$ are the number of control output ports and DRPs, respectively, in $G$. Fig.~\ref{fig:control_table} provides an example of a control table.

When a dynamic actor $x$ fires, it first consumes a 
control token from its control port. A control token in turn encapsulates a
{\em control value} $\bar{v}$, which is a vector $\bar{v}[1], \bar{v}[2], \ldots, \bar{v}[K]$ of
Boolean elements, and $K$ is the number of DRPs in $x$. In other words, there is one element in $\bar{v}$ for each DRP of $x$ (For reasons of clarity, here $K$ is assumed to equal the DRP count of $x$. A more general formulation is presented in Section~\ref{sec:designtime}). 
A control value $\bar{v}$ produced on a port $j$ is related to the control table in that each entry $T[j][i]$ of the control table indicates the element index in $\bar{v}$ that is used to configure the DRP $i$.
For example, $T[j][i] = 3$ means that the third element of each control token produced on port $j$ is used to control DRP $i$. If port $j$ is not used to configure DRP $i$, then $T[j][i] = 0$.

Based on the control value of the consumed control token, the
token rate of each DRP $p$ of $x$ is
fixed to either 0 or $\mxatr(p)$ \textit{for the duration of the current firing} $\phi$. More specifically, the token rate on DRP $p$
for firing $\phi$ is determined by 
\begin{equation} 
\label{eq:dataflow}
\mxtokrate(p, \phi) = \mxbtoi(\bar{v}_\phi[p]) \times \mxatr(p),
\end{equation}
\noindent where $\bar{v}_\phi$ denotes the control
value consumed in firing $\phi$, and $\mxbtoi$ 
(Boolean to integer) represents 
a simple function that maps Boolean values to integer values: that is,
$\mxbtoi(\mxfalse) = 0$, and $\mxbtoi(\mxtrue) = 1$. 

The computation associated with a given firing $\phi$ must adhere to the
dynamically-adjusted dataflow constraints imposed by
Equation~\ref{eq:dataflow}.  
Implementation of actor firing functions in PRUNE is discussed in Section~\ref{ssec:actors}.

\begin{figure}
\centering
\includegraphics[width=\linewidth]{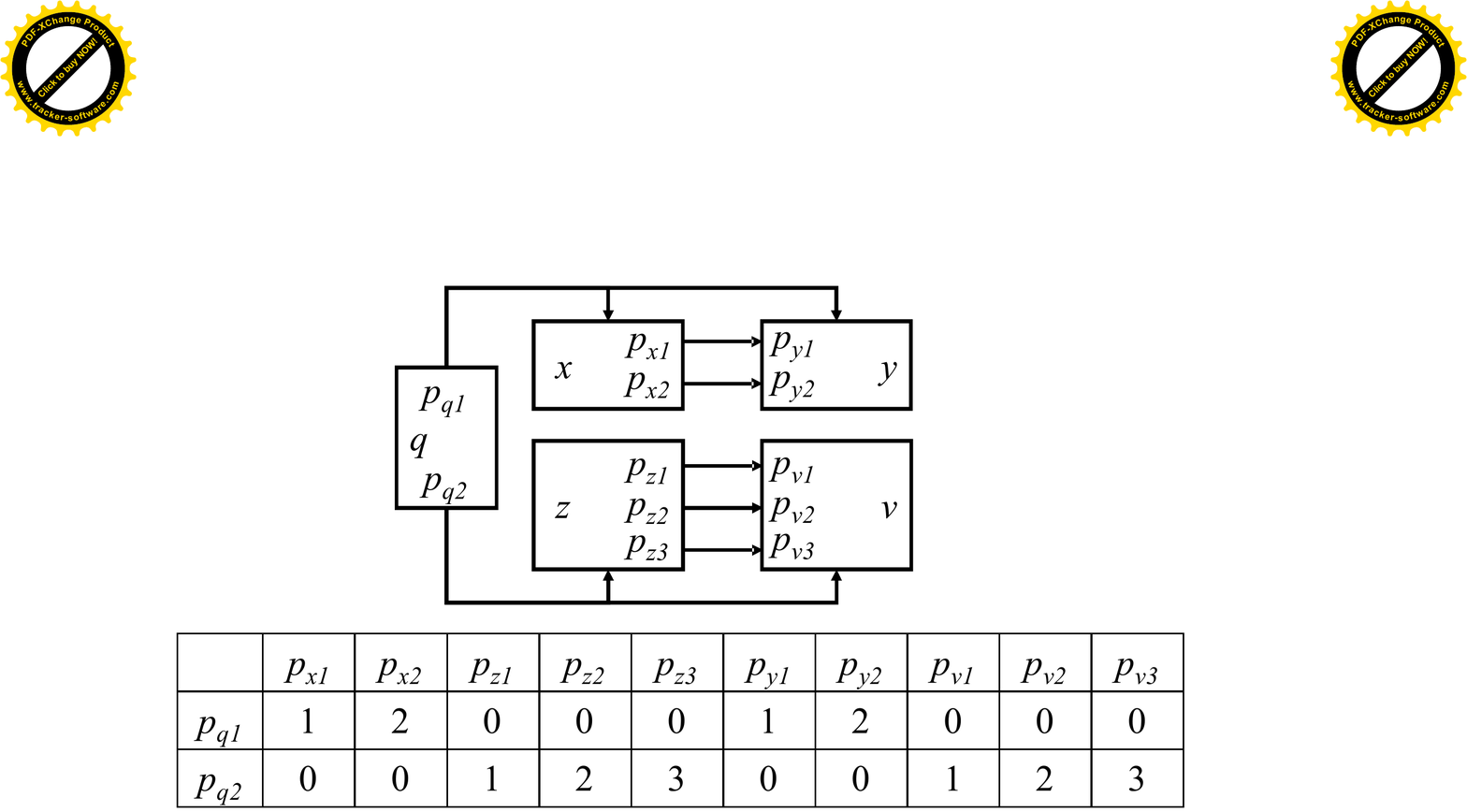}
\caption{A PRUNE graph and its control table. Configuration actor $q$ controls the DRPs of dynamic actors $x$, $y$, $z$ and $v$.}
\label{fig:control_table}
\end{figure}

\subsection{Token Delays}

Each FIFO channel $f \in F$ has a non-negative integer {\em delay} associated
with it, which specifies the number of initial tokens that are placed in the
channel at system setup time (before the graph is executed). Such delays can be
used, for example, to implement the $z^{-1}$ operator in signal processing
(e.g., see~\cite{bhat2013x1}). If $p^+$ and $p^-$ are two ports that are connected
to a common FIFO $f$, then (with a minor abuse of notation) we denote the
delay associated with $f$ by $\mxdelay(p^+, p^-)$  or by $\mxdelay(f)$.
It is of high importance to notice that the presence of delays on FIFOs combined with PRUNE's \textit{symmetric-rate} dataflow behavior can lead to unaligned FIFO accesses. This is discussed in Section~\ref{ssec:channels}.

\section{Design Rules}
\label{sec:rules}

Arbitrary connections of dynamic actors can lead to inconsistent dataflow behavior \cite{Buck93}.
Such inconsistencies can lead to deadlock or unbounded 
accumulation of tokens within FIFOs, which are problematic
when a signal processing system must operate on very large, possibly
unbounded streams of data~\cite{bhat2013x1}.

PRUNE imposes a small set of concrete design rules
to ensure that dynamic activation of ports in a given graph 
is performed in a manner that maintains matched patterns
of token production and consumption across each FIFO in a PRUNE graph.
Thus, tokens that need to be consumed are ensured to have corresponding
producers (producing actors), and similarly, production of
tokens is matched with a corresponding ``demand'' to consume the produced data.
The design rules therefore ensure consistency
in the dynamic dataflow behavior of PRUNE graphs.

In the remainder of this section, we formulate the design rules of PRUNE.
The precise formulations provided here in terms of fundamental dataflow
and graph-theoretic concepts allow the rules to be checked automatically
within design tools. Indeed, in our prototype analysis tool for PRUNE, 
which we report on in  Section~\ref{sec:designtime}, the
design rules are checked automatically to aid the designer in iteratively
refining a design as needed until all design rules are satisfied.

As necessary background we first
review some graph theoretic concepts in the context of PRUNE graphs. We say that
two PRUNE actors $a$ and $b$ are {\em adjacent} if there is
a FIFO that connects a port of $a$ to a port of $b$.
A {\em chain} in a PRUNE graph is a non-empty sequence
$S = (a_1, a_2, \ldots , a_N)$ of actors in the graph such that
for each $i = 1, 2, \ldots, (N - 1)$, $a_i$ and $a_{i+1}$ are
adjacent. We say that the chain $S$ {\em connects} $a_1$ and $a_N$.
The chain $S$ is a {\em simple chain} if
all of the $a_i$'s are distinct. Given two actors $x$ and $y$, an actor $z$ $\notin \{x, y\}$ is said to \textit{connect} actors $x$ and $y$ if there is a chain $S$ that connects $x$ and $y$ such that $S$ contains $z$.

Now suppose that $p_x$ and $p_y$ are distinct ports within two actors $x$ and
$y$, respectively, of a PRUNE graph $G$.  We say that $p_x$ and $p_y$ are {\em
linked ports} if (a) $\mxfifo(p_x) = \mxfifo(p_y)$  or (b) there is a simple
chain $(x, a_1, a_2, \ldots , a_N, y)$ of actors, where $p_x$ is connected to a
port of $a_1$, and $p_y$ is connected to a port of $a_N$.  The sequence of
$a_i$s in (b) is referred to a {\em connecting subchain} associated with the
linked ports $\{p_x, p_y\}$. Note that for the same linked ports $\{p_x, p_y\}$,
there can be multiple connecting subchains. If $p_x$ and $p_y$ are linked
ports, and they are both DRPs, then we say that they are {\em linked DRPs}.

The {\bf first design rule}, called
the {\em linked port control rule}, 
is that for each pair $\{p_x, p_y\}$ of linked DRPs, the ports must be controlled by the same control output port $p_q$, and by the same element of the associated control token --- that is, $T[p_q][p_x] = T[p_q][p_y] > 0$.

The {\bf second design rule}, called the {\em balanced delay rule}, states that if a control output
port $p_q$ controls DRPs $\mxcontrol(x)$ and $\mxcontrol(y)$, then 
$\mxdelay(p_q, \mxcontrol(x)) = \mxdelay(p_q, \mxcontrol(y))$.
In other words, the control input ports of $x$ and $y$
must be connected to $p_q$ with the same delay.

The {\bf third design rule}, called the {\em connecting subchain rule}, states that if $x$ and $y$ are dynamic actors, $\{p_x, p_y\}$ are linked DRPs with $\mxparent(p_x) = x$ and $\mxparent(p_y) = y$, $S = (a_1, a_2, \ldots , a_N)$ is a connecting subchain associated with $\{p_x, p_y\}$, then (1) actor $a_i \in S$ must be a static processing actor, and (2) each connecting subchain, to which actor $a_i \in S$ belongs, must be associated with the two dynamic actors $x$ and $y$.

The {\bf fourth design rule}, called the {\em single-sided dynamism rule}, states that a dynamic actor may only have dynamic input ports, or dynamic output ports, but not both.

The {\bf fifth design rule}, called the {\em encapsulation rule}, states that if $x$ and $y$ are dynamic actors, $\{p_x, p_y\}$ are linked DRPs with $\mxparent(p_x) = x$ and $\mxparent(p_y) = y$, $S = (a_1, a_2, \ldots , a_N)$ is a connecting subchain associated with $\{p_x, p_y\}$, and $b \notin S$ is an actor that connects to an actor $a_i \in S$ through an SRP of $b$, then $b$ must belong to a chain that connects $x$ and $y$.

\begin{figure}
\centering
\includegraphics[width=0.89\linewidth]{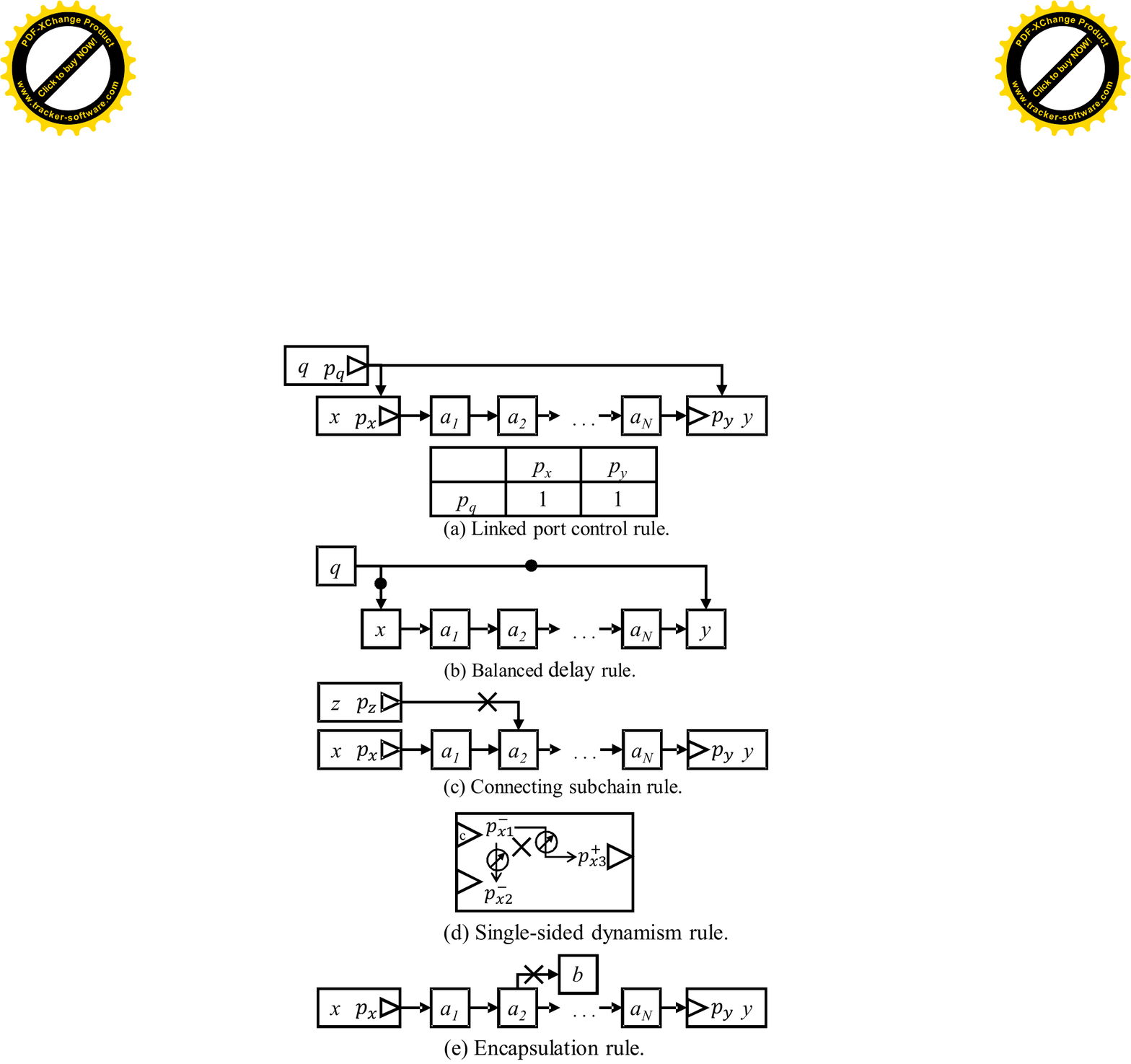}
\caption{An illustration of design rules in PRUNE.}
\label{fig:designrules}
\end{figure}

Fig.~\ref{fig:designrules} provides illustrations of the five design rules
in PRUNE. The control table shown in the lower part of
Fig.~\ref{fig:designrules}(a) represents relationships between the control
output port $p_q$ and two DRPs --- DRP $p_x$ of actor $x$, and DRP $p_y$ of
actor $y$. The control table shows that since DRPs $p_x$ and $p_y$ are controlled by the same control output port $p_q$
and the same element of the associated control token, the linked
port control rule is satisfied. Fig.~\ref{fig:designrules}(b) shows a graph that satisfies the balanced delay rule, whereas Fig.~\ref{fig:designrules}(c) shows an example that violates the connecting subchain rule: actor $a_2$ of the connecting subchain $(a_1, a_2, \ldots , a_N)$ associated with the dynamic actors $x$ and $y$ also belongs to another connecting subchain, associated with dynamic actors $z$ and $y$. In Fig.~\ref{fig:designrules}(d), the dynamic actor $x$ violates the single-sided dynamism rule, as it contains both input and output DRPs.
Finally, Fig.~\ref{fig:designrules}(e) depicts a violation of the encapsulation rule. Here, actor $a_2$ belongs to a chain that connects dynamic actors $x$ and $y$. Actor $a_2$ is adjacent to actor $b$, but $b$ is not part of a chain that connects $x$ and $y$.

\section{Compile Time Graph Analysis}
\label{sec:designtime}
\noindent

The aim of the proposed design rules is to ensure that deadlock freedom and bounded memory analysis of a PRUNE graph can be completed in finite time, i.e. these problems remain decidable. This section establishes this decidability result for the PRUNE model of computation.

The design rules require that DRPs and their dynamic actor parents $x$ and $y$ always appear in pairs and both of these are controlled by the same configuration actor $q$. Therefore, in a PRUNE graph $G$ we can identify zero or more {\em dynamic processing graphs} (DPGs) that each consist of a) one configuration actor $q$, b) exactly two dynamic actors, $x$ and $y$, and c) any number of chains that connect $x$ and $y$. These chains form the {\em dynamic components} (DCs) of the DPG. Given a DPG $D$, the set of DCs of $D$ is denoted $Z_c(D)$, and the pair of dynamic actors contained in $D$ is denoted $\delta(D)$.

Consider a DPG that contains dynamic actor $x$ with $K$ output DRPs $p_{xi}$ $(i = 1, 2, \ldots, K)$, and dynamic actor $y$ with $L$ input DRPs $p_{yj}$ $(j = 1, 2, \ldots, L)$, $\{x,y\}=\delta(D)$; we require that each $p_{xi}$ is a linked DRP with at least one of $p_{yj}$. Our procedure for finding the DCs $Z_c(D)$ associated with a given DPG $D$ can be expressed as follows:

1. For each linked DRP $\{p_{xi}$, $p_{yj}\}$, where $\mxfifo(p_{xi}) = \mxfifo(p_{yi})$, insert a dummy actor $d$ such that $\mxfifo(p_{xi}) = \mxfifo(p_{d}^{-})$ and $\mxfifo(p_{d}^{+}) = \mxfifo(p_{yj})$.

2. Remove $q$, $x$, $y$, and all FIFOs $\mxfifo(p_q)$, $\mxfifo(p_x)$ and $\mxfifo(p_y)$ in the DPG. This removal procedure decomposes the DPG into a set of connected components that form the DCs. Thus, $Z_c(D) = \{Z_1, Z_2, \ldots, Z_M\}$, where $M \in [1, min(K,L)]$ is an integer constant.

As an example, consider the DPG in Fig.~\ref{fig:analysis} that consists of the configuration actor $q$, the pair of dynamic actors $\delta(D) = \{x, y\}$, and actors $a_1$ through $a_4$. It can be seen that the linked DRPs of this DPG are $\{p_{x1}, p_{y1}\}$, $\{p_{x2}, p_{y1}\}$, $\{p_{x3}, p_{y2}\}$, and $\{p_{x4}, p_{y3}\}$.

The linked port-pair $\{p_{x3}, p_{y2}\}$ is connected over a one-actor chain of $S = (a_4)$, whereas the ports $\{p_{x4}, p_{y3}\}$ are linked directly, i.e., $\mxfifo(p_{x4}) = \mxfifo(p_{y3})$. Here, the DC analysis procedure inserts the dummy actor $d$. Finally, for $\{p_{x1}, p_{y1}\}$ and $\{p_{x2}, p_{y1}\}$, the adjacent actors $a_1, a_2$, and $a_3$ form one DC. Hence, in this example, the DPG consists of three DCs $Z_c(D) = \{Z_1, Z_2, Z_3\}$, where $Z_1 = \{a_1, a_2, a_3\}$, $Z_2 = \{a_4\}$, and $Z_3 = \{d\}$. The dummy actor $d$ exists only for the duration of the graph analysis and is not carried to the final implementation.

For a DPG to be \textit{valid}, we require that
a) each DC in $Z_c(D)$ is connected to at least one DRP of $x$ and to at least one DRP of $y$ ($\{x,y\}=\delta(D)$), and
b) there must be exactly one Boolean element in $\bar{v}$ for each DC in $Z_c(D)$. Thus, the number of elements in $\bar{v}$ must equal $M$.

Expressed through binary relations of sets, the relation between DCs of $Z_c(D)$ and the elements of $\bar{v}$ is required to be a \textit{bijection}. In contrast, the relation $\mxdrps(x) \rightarrow Z_c(D)$ (and $\mxdrps(y) \rightarrow Z_c(D)$) is surjective, but not necessarily injective. Here, the set of all DRPs of $x$ is denoted as $\mxdrps(x)$. In other words, the DRPs associated with a specific $Z_k$ must all simultaneously be configured with their respective $\mxatr$s, or they all must be configured with their $\mxitr$s.

Consequently, in a valid DPG $D$, each control token (which encapsulates $\bar{v}$) produced by $q$ sets each DRP $p_x$ of $x$ and each DRP $p_y$ of $y$ either to its $\mxatr$ or its $\mxitr$, where $\{x, y\}$ = $\delta(D)$. A control token effectively sets each DC in $Z_c(D)$ as \textit{active} or \textit{inactive}. In this context, we say that the DC $Z_k \in Z_c(D)$ is active if it will be provided with tokens through DRPs of $x$, and the actors within $Z_k$ will consequently fire producing tokens that are consumed by actor $y$. Inactiveness of $Z_k$, in contrast, means that DRPs of $x$ will not provide tokens to $Z_k$, and consequently no actor in $Z_k$ will fire, and actor $y$ does not demand tokens from $Z_k$.

Returning to Fig.~\ref{fig:analysis}: due to Design Rules 1 and 2, for a valid DPG, for example the linked DRP pairs $\{p_{x1}, p_{y1}\}$ and $\{p_{x2}, p_{y1}\}$ are all simultaneously set either to their $\mxatr$ or to their $\mxitr$; a deadlock would follow if any of the three ports would be set to its $\mxitr$, while the other would be set to its $\mxatr$ (and vice-versa).

\begin{figure}
\centering
\includegraphics[width=\linewidth]{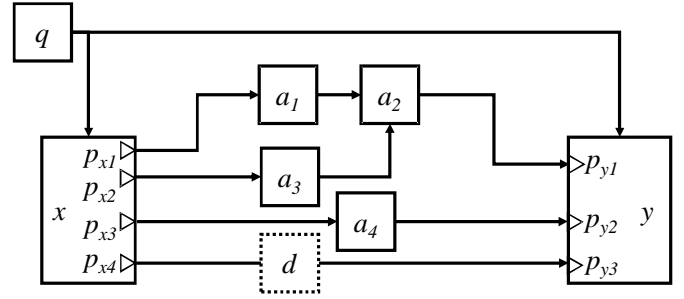}
\caption{PRUNE compile time analysis example for one DPG. Connections to the complete PRUNE graph $G$ are not shown.}
\label{fig:analysis}
\end{figure}


Above, the necessary background information has been given for our discussion on the decidability of PRUNE graphs. As the design rules of Section~\ref{sec:rules} restrict dynamic actors to exist within DPGs, actors outside DPGs have fixed data rates and have dataflow relationships with DPGs that are simple, and can readily be validated using standard SDF techniques. Thus, in the remainder of this section, the decidability discussion concentrates on DPGs. 

In general, a PRUNE graph $G$ may contain multiple DPGs. However, our design rules require the DPGs of $G$ to be independent of each other. Since the number of distinct DPGs is finite, our proof of decidability can be reduced to proving that consistency analysis for a single DPGs is decidable.

\begin{defi}[Consistency]
A PRUNE graph is consistent if it can be scheduled with guarantees of bounded memory requirements and deadlock-free operation, regardless of what inputs are applied.
\end{defi}

\begin{theorem}
The consistency analysis of a PRUNE graph is decidable.
\end{theorem}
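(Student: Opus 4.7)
The plan is to reduce the consistency analysis of a PRUNE graph $G$ to a finite collection of classical SDF consistency checks, each of which is decidable via the standard balance-equation and topology-matrix rank test of Lee and Messerschmitt.

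First I would localise the analysis to a single dynamic processing graph. Outside of any DPG, every port is an SRP whose fixed rate coincides with the rate of its incident FIFO, so the complement of the DPGs in $G$ is a symmetric-rate SDF subgraph whose consistency is decidable by classical results. Since the design rules already force distinct DPGs of $G$ to be independent, and since $G$ contains only finitely many DPGs, it suffices to show decidability for a single DPG $D$ with $\delta(D) = \{x, y\}$, configuration actor $q$, and $Z_c(D) = \{Z_1, \ldots, Z_M\}$.

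Second I would enumerate the finitely many \emph{modes} of $D$. A control token emitted by $q$ carries a Boolean vector $\bar{v}$ of length $M$, so there are at most $2^M$ possible activation patterns. For a fixed pattern, Design Rule~1 together with Design Rule~2 guarantees that the same entry of $\bar{v}$ simultaneously activates the DRPs of $x$ and $y$ attached to a given $Z_k$, with matching delay on the two control FIFOs from $q$. By Design Rule~3 the intermediate actors of every DC are static processing actors, so the active part of $D$ is a symmetric-rate SDF graph whose edges are exactly the PRUNE FIFOs of the activated DCs together with the DRPs of $x$ and $y$ fixed to their $\mxatr$s. Design Rule~4 forbids a dynamic actor from mixing input and output DRPs, ruling out dynamic self-interference across the DPG, and Design Rule~5 (encapsulation) isolates the active subgraph from external taps on the intermediate chains. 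For each of the finitely many modes, consistency is then a standard SDF check.

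Third and most delicately, I would lift per-mode consistency to consistency under an arbitrary input-driven sequence of modes. Because the balanced delay rule makes $x$ and $y$ observe identical control vectors in lockstep, the firing pair triggered by one control token consumes exactly the tokens it produces on every active DC and leaves the inactive DCs untouched, so each FIFO of the DPG returns to its initial delay occupancy after each pair. The peak buffer occupancy of the whole DPG is therefore bounded by the maximum, over the finitely many modes, of the classical SDF buffer bound for that mode, and deadlock freedom reduces likewise to per-mode SDF liveness. The main obstacle I anticipate is making this last reduction fully rigorous: one must show, using Design Rules~3 and~5, that no token produced while a DC is active can leak into a subsequent firing in which that DC is inactive, and symmetrically that $x$ and $y$ can never disagree on which DCs are currently active because the control FIFO delays from $q$ are balanced by Rule~2.
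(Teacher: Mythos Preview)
Your proposal is correct and shares the paper's overall skeleton: reduce to a single DPG, observe that the dynamic behaviour resolves into finitely many static SDF instances each decidable by the Lee--Messerschmitt test, and then argue that an arbitrary run is a concatenation of periodic SDF schedules each leaving zero net token change, so the global buffer bound is the maximum of the per-instance SDF bounds.

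The one substantive difference is the granularity of the decomposition. You enumerate the (at most) $2^M$ activation \emph{modes} of the DPG and verify each resulting active subgraph as an SDF instance. The paper instead checks only the $M$ dynamic components $Z_1,\ldots,Z_M$ individually, exploiting the fact---guaranteed by Design Rules~3 and~5---that distinct DCs share no internal actors and are therefore independent SDF graphs; its execution sequence $\Omega$ is then a sequence of per-DC periodic schedules $P(Z_k)$ rather than per-mode schedules. This buys a linear rather than exponential number of SDF checks, and it also dissolves the obstacle you flag at the end: because each $Z_k$ is analysed in isolation and its periodic schedule returns every FIFO of $Z_k$ to its initial occupancy, there is no cross-DC or cross-mode token leakage to rule out. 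Your mode-based route is equally valid and arguably more direct, but it does more work and needs the extra argument you identify; the paper's DC-based route avoids that by construction.
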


\begin{proof}[Proof]
Let $Z_c(D) = {Z_1, Z_2, \ldots, Z_M}$ be the set of DCs of DPG $D$. Since Design Rule 3 requires that all actors in DCs are static processing actors, there is a finite number of different SDF graphs $Z_1, Z_2, \ldots, Z_M$ that can be active during execution of the DPG $D$.

Since each $Z_k, k \in [1,M]$ is an SDF graph, it is decidable to determine whether or not the graph is consistent \cite{Lee87}. If any of the $Z_k$'s is not consistent, then deadlock-freedom and bounded memory scheduling for the DPG $D$ cannot be guaranteed, and therefore $D$ is inconsistent. On the other hand, if all $Z_k$'s are consistent, then there exists a valid, periodic schedule $P(Z_k)$ for each $Z_k$ \cite{Lee87}. $P(Z_k)$ provides a schedule for $\mathit{actors}(Z_k)$, where $\mathit{actors}(X)$ represents the set of actors that are contained in a given DC $X$.

For each FIFO $f$ connected to an actor $a \in \mathit{actors}(Z_k)$, there is a buffer bound $B_k(f)$ which gives the maximum number of tokens on $f$ during an execution of $P(Z_k)$. The existence of this buffer bound follows from the properties of consistent SDF graphs~\cite{Lee87}. There is then a finite maximum $\beta(f) = max(B_k(f) \mid Z_k \in Z_c(D))$. 

Any execution of the DPG $D$ can be carried out by a sequence of schedules $\Omega = (O_1, O_2, \ldots$) where for each $O_k$, there is an $H_k \in Z_c(D)$ such that $O_k = P(H_k)$. $H_k$ can be viewed as the $k$th active graph during execution of the enclosing DPG.

Since each $Z_k$ is assumed to be consistent and have a valid, periodic schedule $P(Z_k)$, $O_k$ produces no net change in the token populations of the buffers between $\mathit{actors}(Z_k)$, and consequently, the number of tokens on a FIFO $f$ during an execution of $O_k$ is bounded by $B_k(f)$. It follows that the number of tokens on $f$ during execution of $\Omega$ is bounded by $\beta(f)$.

In summary, the consistency of the DPG $D$ can be determined by analyzing the consistency of the elements of $Z_c(D)$. Since the elements of $Z_c(D)$ can be analyzed in finite time (due to the decidability of SDF and the fact that $Z_c(D)$ has finite cardinality), it follows that consistency analysis for DPG $D$ is decidable. 
\end{proof}


The design rules presented in Section~\ref{sec:rules} and the analysis presented in this section allow construction and verification of DPGs within a PRUNE application graph $G$. A DPG can be seen as a generalization of the \textit{conditional schema} of \cite{gao1992well} in terms of the number and topology of conditional branches, and the fact that the branches are not mutually exclusive. Following the same logic, it is possible to formulate useful design rules for other kinds of dynamic constructs on top of the PRUNE MoC, which is a useful direction for future work.

\section{The PRUNE Framework}
\label{sec:runtime}

The previously described design rules and compile time graph analysis have been implemented to the \textit{PRUNE compiler and analyzer} as shown in Fig.~\ref{fig:framework}. The PRUNE compiler takes three types of input files: the application graph, the platform graph, and the actor-to-platform mapping.

The formats of the application graph and the platform graph have directly been adopted from the DAL framework \cite{Schor12}, but some extensions have been introduced to provide support for execution of dynamic actors on GPUs, multi-dimensional OpenCL workloads and static data for OpenCL actors.

The PRUNE compiler and analyzer transforms the XML input files into an internal representation, which is suitable for performing graph analysis and verification. If the sanity checks and the compile-time analysis (Section~\ref{sec:designtime}) for the input files pass, the compiler outputs the main program file of the application, which takes care of initializing and terminating OpenCL device access, memory allocations, actors and FIFOs.

\begin{figure}
\centering
\includegraphics[width=\linewidth]{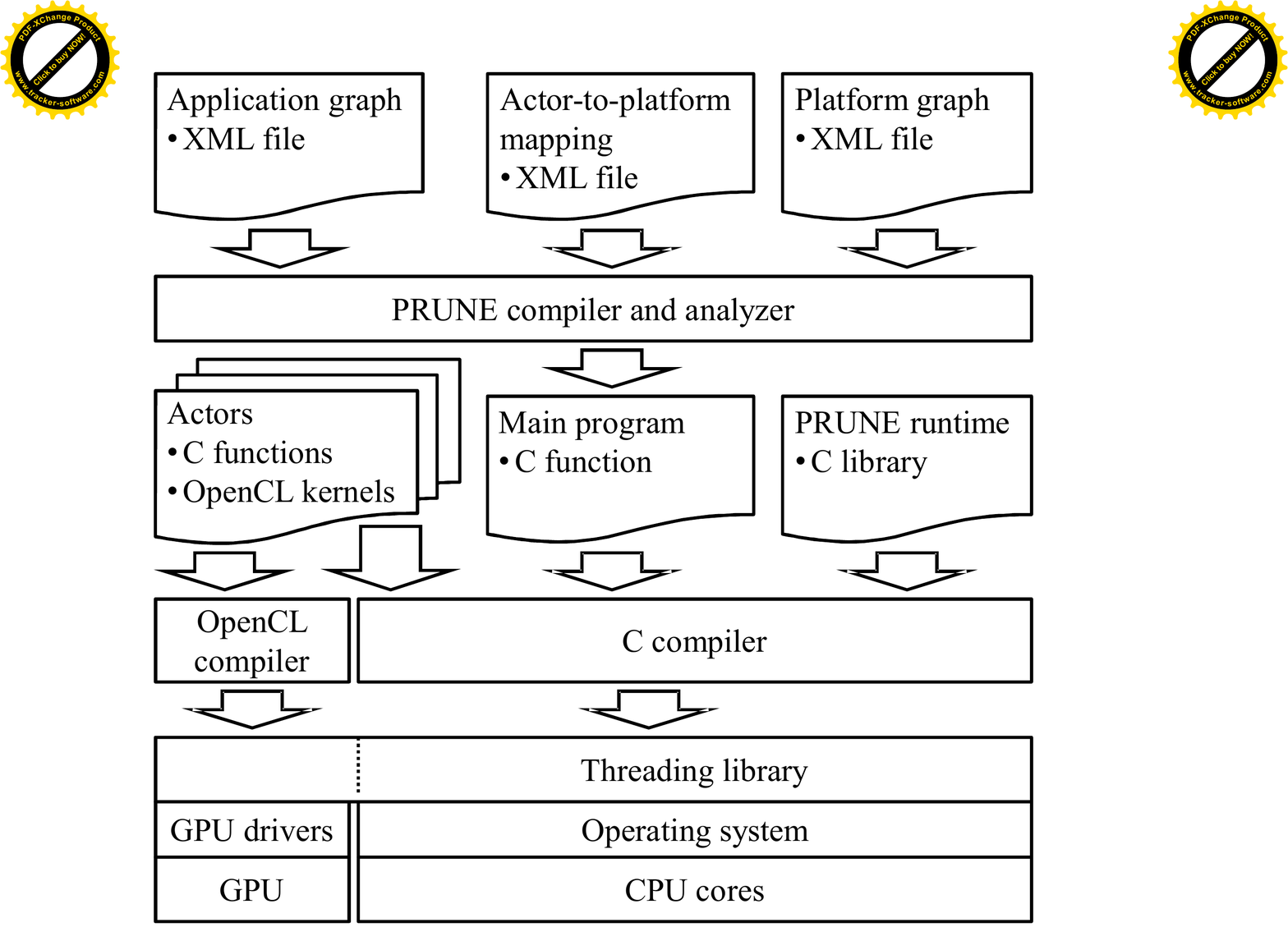}
\caption{Overview of the PRUNE framework.}
\label{fig:framework}
\end{figure}


After the PRUNE compiler and analyzer has successfully produced the main C file of the application, the application is ready to be compiled with the target platform specific C and OpenCL compilers.

Besides the main C file, this compilation step requires the functional description of each actor. The PRUNE actor API (application programming interface) closely follows the DAL API except for GPU-mapped actors that are described in OpenCL (in DAL a small translator converts appropriately formatted C actors to their OpenCL equivalents). The PRUNE actor API essentially provides functions for inter-actor communication, such as \texttt{fifoWriteStart}, \texttt{fifoWriteEnd}, etc.

Finally, compilation with the target-specific C compiler requires the PRUNE run-time library, which contains application independent actor wrappers, FIFO implementations and OpenCL support. The following detailed description of the PRUNE runtime framework contains some extension compared to our preliminary work \cite{Boutellier2016}, where it was first presented. For example, Equation~\ref{eq:fifosize} has been generalized to support token delays $> 1$.

\subsection{Description of Actors}
\label{ssec:actors}

In the PRUNE runtime framework the description of each actor consists of the mandatory \textit{fire} function, and optional \textit{init}, \textit{control}, and \textit{finish} functions. The \textit{fire} function describes the actor's behavior upon firing and comprises the reading of SRP and DRP input ports, computation and writing to SRP and DRP output ports. The optional \textit{init} and \textit{finish} functions are only executed once on application initialization and termination, and are mainly useful for source and sink actors to start and end interfacing with I/O. The \textit{control} function is only required for dynamic actors and is executed once for each firing of the actor, right before invoking the \textit{fire} function. The \textit{control} function is responsible for reading the control input port and setting the token rates of DRPs.

This formulation, where actors consist of \textit{init}, \textit{fire}, and \textit{finish} functions is identical to the DAL \cite{Schor12} framework. However, the \textit{control} function, especially required for enabling dynamic data rate actors on OpenCL / GPU devices, is specific to PRUNE. The \textit{control} function takes one control token as input and sets the token rate (to $\mxitr(p)$ or $\mxatr(p)$ as defined in Section~\ref{sec:proposed_model}) of each DRP for the duration of one firing.

\begin{figure}
\centering
\includegraphics[width=\linewidth]{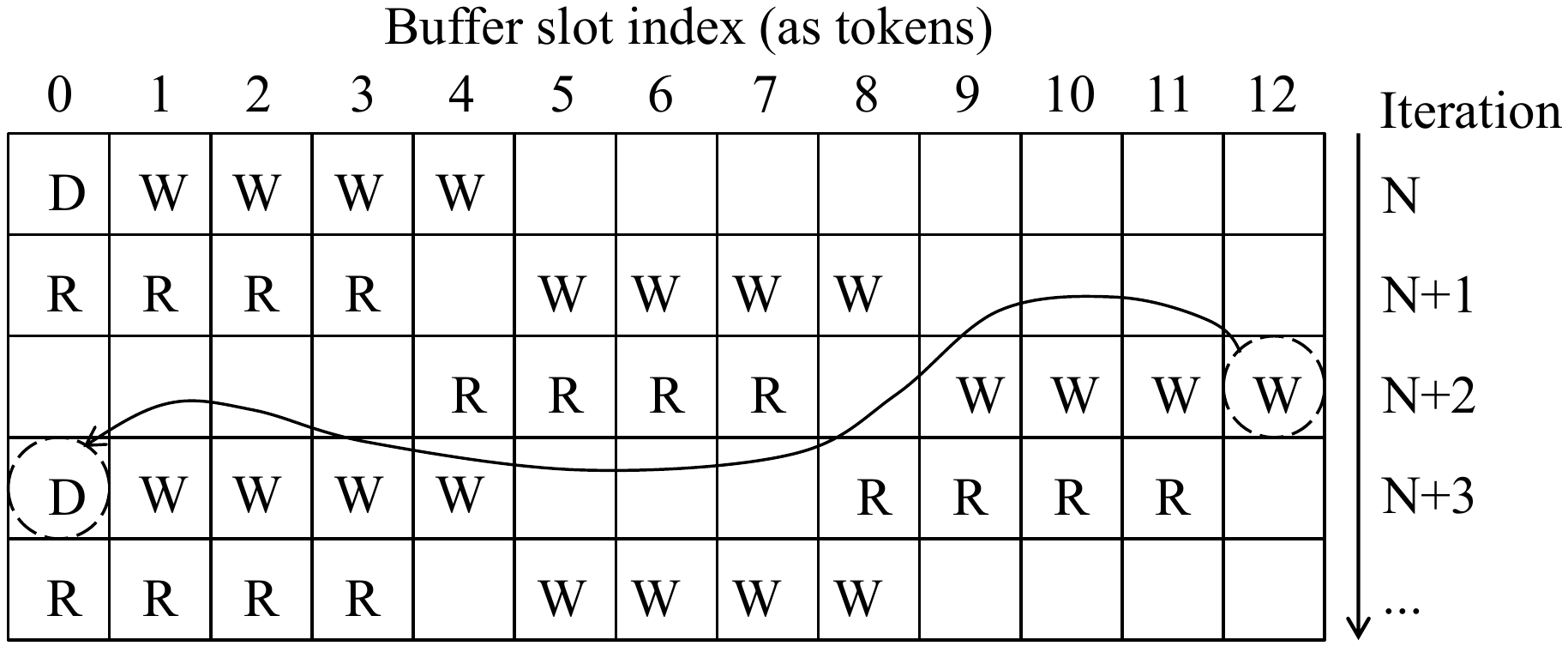}
\caption{FIFO channel token access pattern in the case of one delay token for token rate 4 and $C=3$.}
\label{fig:triplebuffer}
\end{figure}

\subsection{Communication Channels}
\label{ssec:channels}

A communication channel in the proposed framework connects an output port of an actor to an input port of another actor, heeding FIFO behavior. In contrast to other (e.g. \cite{Schor12}) programming frameworks, the capacity $\Gamma_f$ of a communication channel $f$ cannot be arbitrarily chosen by the programmer, but is exactly specified as
\begin{equation}
\label{eq:fifosize}
\Gamma_f = 
    \begin{cases}
	B * (r * C + Q), & \text{$Q$ not an integer multiple of $r$} \\
	B * max(r * C, Q), & \text{otherwise,} \\
    \end{cases}
\end{equation}
where $r = \mxfiforate(f)$, $B$ is the size (e.g. in bytes) of one token of FIFO $f$, $Q = \mxdelay(f)$, and $C$ is a compile-time constant. For example, setting $C=2$ creates a double buffer, and $C=3$ creates a triple buffer.

A regular channel (the \textit{otherwise} case in Equation~\ref{eq:fifosize}) is double or a triple buffer, which allows simultaneous reading and writing of tokens to the channel. On each write, $f$ assumes to receive $r$ tokens, and on each read the channel outputs $r$ tokens. However, for channels that contain initial tokens (delay), the channel is implemented as a slightly more complex buffer that implements a specific access pattern to enable simultaneous reads and writes to the channel. This is exemplified in Fig.~\ref{fig:triplebuffer} with $r = 4$. 

At application initialization the initial token in the channel, displayed with $D$ in Fig.~\ref{fig:triplebuffer}, resides in buffer slot 0. The first write to the channel occupies slots 1 ... 4, whereas the first read consumes tokens from slots 0 ... 3 and so forth. The third write to the channel reaches the end (slot 12) of the buffer, followed by an explicit data copy from slot 12 to slot 0, and the access pattern starts to repeat. The access pattern is repetitive and can be generalized to any token rate beyond one. 

Looking at Fig.~\ref{fig:triplebuffer}, it is evident that this solution does not minimize the memory footprint, but it was chosen as it offers 1) uncompromised throughput and 2) transparency to the application programmer. Ring buffers were considered inadequate, as OpenCL / GPUs offer the best combination of performance and ease of programming when input and output data to actors is provided as contiguous arrays. 

\subsection{Concurrency, Scheduling and Actor-to-Core Mapping}
\label{ssec:concurrency}

The proposed framework has been designed to enable maximally parallel operation. Parallelism is based on threading, such that each actor runs on an operating system (OS) thread of its own, regardless whether the actor is targeted to OpenCL / GPU devices or to one of the CPU cores. Each actor thread is created once at application startup, and is canceled after the application has terminated. Similar to the DAL framework \cite{Schor12, Schor13}, synchronization of data exchange over FIFO channels is based on \textit{mutex} locks and blocking communication: if an actor attempts to read a channel that has less tokens than the actor requires, the reading actor blocks until sufficient data is available. This enables very efficient multiprocessing, but on the other hand makes the MoC somewhat more restricted than e.g. that of DPNs \cite{Tretter15}.

As each actor is instantiated as a separate thread using the GNU/Linux \textit{pthreads} library, the scheduling of actor firings (heeding data availability) is left to the OS. If the programmer so chooses, the framework allows fixing of actors to specific CPU cores, otherwise the OS chooses the core on which the actor is executed.

It is necessary to state that alternatively to the adopted OS threading based concurrency, it would also have been possible to build concurrency and synchronization on top of OpenCL events, however this would have limited the applicability to platforms where both the CPU cores and GPUs have OpenCL drivers. The adopted OS threading based solution, however, is beneficial due to its backwards compatibility: with this solution it is possible to jointly synchronize and run also non-OpenCL compatible CPU cores with GPUs.

\begin{table*}
\caption{Platforms used for experiments.}
\label{table:platforms}
\begin{tabular}{p{0.8cm}p{5.5cm}p{6.6cm}p{3.2cm}}
\hline\noalign{\smallskip}
Tag & CPU & OpenCL Device & Operating System\\
\hline 
Carrizo & AMD Pro A12-8800B (2.1 GHz, 4 cores) & AMD Radeon R7, OpenCL 2.0, driver 15.30.3 & Ubuntu 14.04, g++ 4.8.4 \\
\hline 
i7 & Intel Core i7-6700HQ (2.6 GHz, 4 cores) & CPU, Intel OpenCL 1.2 driver 6.2.0.1760 & Ubuntu 16.04, g++ 4.8.5 \\
\hline 
RX & Intel Core i7-4770 (3.5 GHz, 4 cores) & AMD Radeon RX 460, OpenCL 1.2, driver 16.40 & Ubuntu 16.04, g++ 4.8.5 \\
\hline 
\noalign{\smallskip}
\end{tabular}
\end{table*}

\begin{table}
\caption{Motion Detection performance in HD 1080p frames/s.}
\label{table:motion}
\begin{tabular}{p{1.2cm}p{1.4cm}p{1.4cm}p{1.4cm}p{1.4cm}}
\hline\noalign{\smallskip}
Tag & DAL & PRUNE & DAL & PRUNE \\
 & Multicore & Multicore & Heterogen. & Heterogen. \\
\hline 
Carrizo & 15.9 & 17.3 & 132 & 140 \\
\hline 
i7 & 39.2 & 39.8 & - & 113 \\
\hline 
RX & 41.1 & 42.9 & 696 & 772 \\
\hline 
\noalign{\smallskip}
\end{tabular}
\end{table}

\begin{table}
\caption{Digital Predistortion performance in complex float megasamples/s.}
\label{table:dpd}
\begin{tabular}{p{1.2cm}p{1.4cm}p{1.4cm}p{1.4cm}p{1.4cm}}
\hline\noalign{\smallskip}
Tag & DAL & PRUNE & DAL & PRUNE \\
 & Multicore & Multicore & Heterogen. & Heterogen. \\
\hline 
Carrizo & 6.15 & 6.64 & n/a & 39.5\\
\hline 
i7 & 6.78 & 7.34 & n/a & 25.1 \\
\hline 
RX & 17.2 & 19.2 & n/a & 106\\
\hline 
\noalign{\smallskip}
\end{tabular}
\end{table}

\begin{table}
\caption{Adaptive Deep Neural Network performance in frames/s.}
\label{table:dnn}
\begin{tabular}{p{1.2cm}p{1.4cm}p{1.4cm}p{1.4cm}p{1.4cm}}
\hline\noalign{\smallskip}
Tag & DAL & PRUNE & DAL & PRUNE \\
 & Multicore & Multicore & Heterogen. & Heterogen. \\
\hline 
Carrizo & 8.63 & 11.5 & n/a & 160 \\
\hline 
i7 & 9.37 & 9.86 & n/a & 299 \\
\hline 
RX & 26.2 & 37.8 & n/a & 1033 \\
\hline 
\noalign{\smallskip}
\end{tabular}
\end{table}

\begin{figure}
\centering
\includegraphics[width=\linewidth]{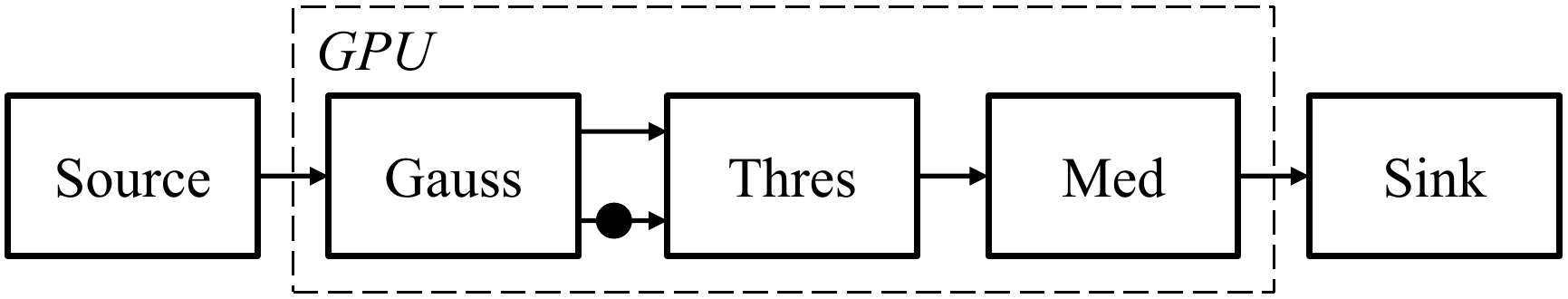}
\caption{The Motion Detection application.}
\label{fig:video}
\end{figure}

\begin{figure}
\centering
\includegraphics[width=0.90\linewidth]{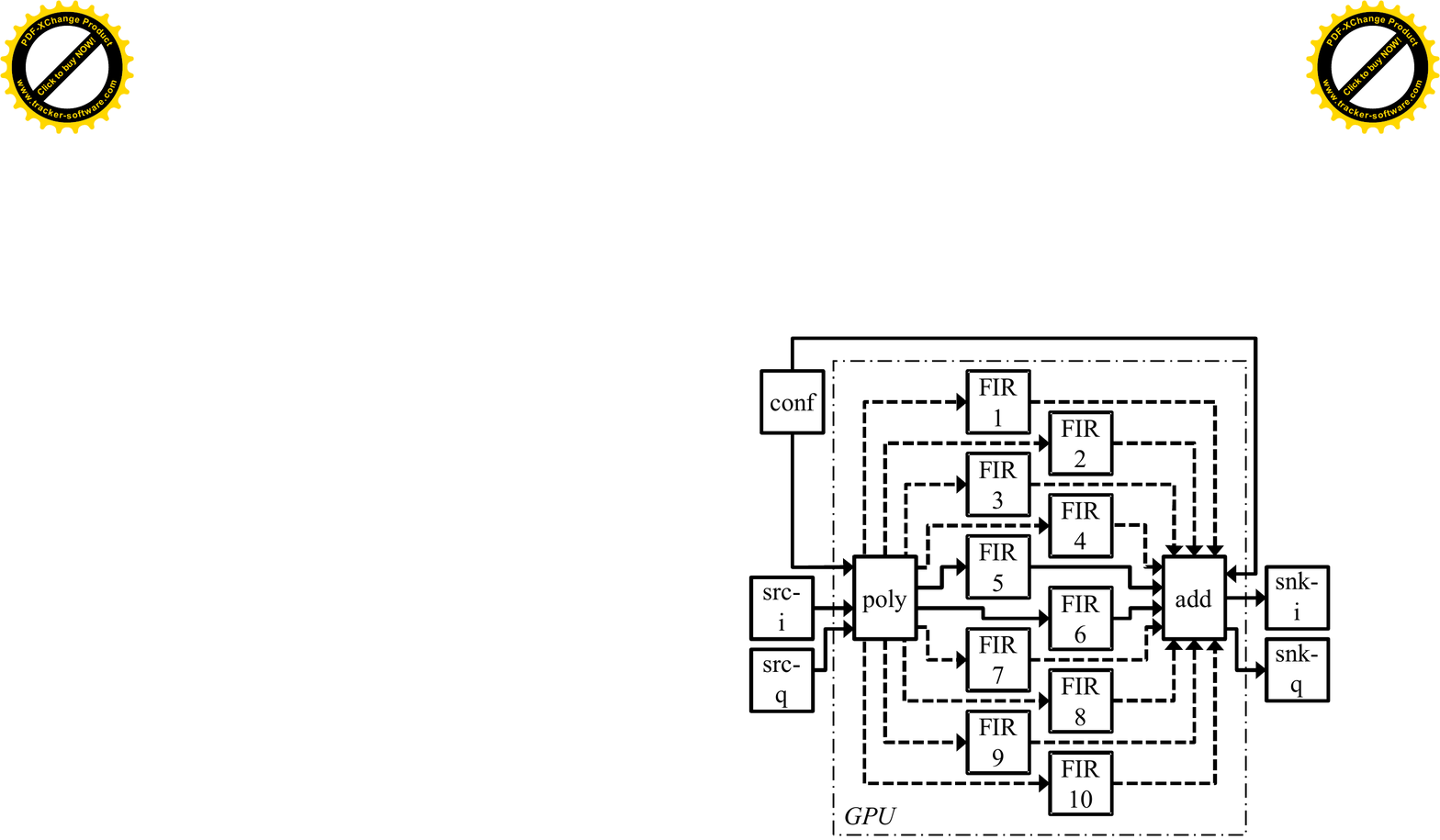}
\caption{The Dynamic Predistortion application.}
\label{fig:dpd}
\end{figure}

\section{Experimental Results}
\label{sec:experiments}

This section presents experimental evaluation, which shows that the PRUNE framework is efficient and suitable for running real-life signal processing workloads. The performance results are compared against the DAL framework \cite{Schor13}, which in many ways resembles PRUNE.

\subsection{Use Case Applications}
\label{ssec:apps}

\subsubsection{Video Motion Detection}

The first application used in our experiments is 8-bit grayscale video Motion Detection that consists of five actors, as shown in Fig.~\ref{fig:video}. The source and sink actors are always executed on CPU cores and are essentially responsible for reading and writing data from/to mass storage. The Gauss actor performs 5$\times$5 pixels Gaussian filtering on input frames, followed by the Thres actor that subtracts consecutive frames and performs pixel thresholding against a fixed constant value. To avoid exceeding frame boundaries, the Gauss actor skips filtering for two pixel rows in the frame top and frame bottom. Finally, the Med actor performs 5-pixel median filtering to reduce noise from the generated motion map.

The distinguishing feature of this use case application is the use of delay tokens: one of the communication channels between the Gauss and Thres actors bears a dot in Fig.~\ref{fig:video} and depicts an initial token, which is a one-frame delay that enables consecutive frame subtraction functionality. Out of the use case applications, Motion Detection is the only one that can be described using fixed token rates, and hence be GPU-benchmarked with both PRUNE and the DAL framework, which we use as a state-of-the-art reference.

The frame size used was 1920$\times$1080, which resulted in the token size becoming 1.98 megabytes. Due to the large token size, the token rate was kept at 1 (in our previous publication \cite{Boutellier2016} that uses a preliminary version of PRUNE, resolution was 320x240 with a token rate of 4).
GPU acceleration was
applied to Motion Detection by mapping the Gauss, Thres and Med actors to the GPU.

\subsubsection{Dynamic Predistortion Filtering}

Dynamic Predistortion (DPD) filtering (Fig.~\ref{fig:dpd}) was used as the second application use case. The algorithm \cite{Abdelaziz13} is used in wireless communications to mitigate transceiver impairments, and essentially consists of 10 parallel 10-tap complex-valued floating-point FIR filters.

DPD significantly differs from the Motion Detection application in the sense that it features actors with dynamic token rates: Fig.~\ref{fig:dpd} shows the configuration (conf) actor that at run-time periodically reconfigures the poly and Adder (add) actors to select which set of the FIR filters is used to process the input signal. The reconfiguration period was set to once every 65536 samples, and the number of active filter actors is allowed to change arbitrarily between 2 and 10. The run time reconfiguration used here is defined by an external input and cannot be modeled e.g. by the CSDF MoC.

The DPD application computes on complex floating-point numbers, which were represented as a pair of single precision floats. To this end, all edges in Fig.~\ref{fig:dpd} \textit{inside} the "GPU" box represent a pair of edges, one for the real part and one for the imaginary part. Hence, the total number of FIFO channels is 46 in this application.

\begin{figure}
\centering
\includegraphics[width=\linewidth]{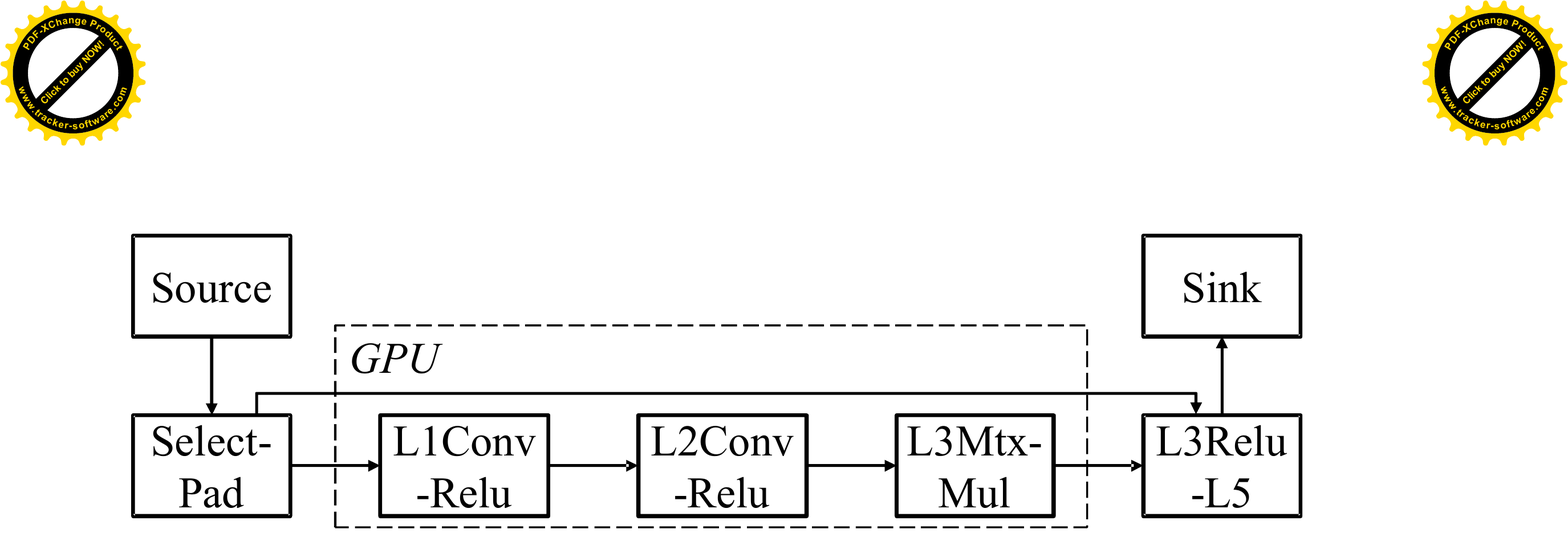}
\caption{The Adaptive Deep Neural Network application.}
\label{fig:cnn}
\end{figure}

\subsubsection{Adaptive Deep Neural Network}
\label{sssec:dnn}

The Deep Neural Network (DNN) application for vehicle classification has first been presented in \cite{hutt2016x3, xie2016resource}. The neural network consists of two convolutional layers followed by three dense layers. The PRUNE implementation of the DNN consists of seven actors, as depicted in Fig.~\ref{fig:cnn}. Here, the application has been extended with adaptiveness that allows dynamically enabling and disabling DNN processing for each frame to, e.g., save on power when deployed to an energy-limited device.

The three GPU-mapped actors (Fig.~\ref{fig:cnn}) represent the neural network core of the application and are demanding both in terms of memory footprint and computational complexity. Layer 1 convolution consists of 2400 floating-point weights, layer 2 of 25600 weights and layer 3 matrix multiplication of 1.8M weights. The input image is delivered in resolution 96$\times$96 (RGB) and separated to 32 feature maps that are convolved by 5$\times$5 pixel kernels in layers 1 and 2. 

The GPU-accelerated convolution layers perform simultaneous convolution and ReLU non-linearity computation, and are processed as a 3-dimensional volume along feature map index, image width and image height axes. Data was mapped to tokens such that the data associated with one input image was mapped to one token. Hence, due to the nature of the algorithm, the token size varied considerably from one FIFO to another. Benchmarking was performed with $\mxatr=24$ for each FIFO channel, as this token rate provided the highest throughput.

Adaptiveness was implemented to the DNN application by introducing a \textit{bypass} channel from the actor Select-Pad to the actor L3Relu-L5. With the bypass channel the computationally demanding DNN processing can be omitted for selected frames, and instead of classification results, the bypass channel provides constant values to the output to indicate omitted classification.

In order to demonstrate the possibility for performing post-analysis application optimizations, the configuration actor was merged to the dynamic actor Select-Pad, preserving identical application functionality. For the moment, the actor merging needs to be done manually, however automatic approaches exist \cite{Boutellier15T}.

Besides lack of dynamic token rates, implementation of GPU accelerated DNN in DAL turned out to be unfeasible for several reasons. First, DAL only supports 1-dimensional kernel processing (OpenCL NDRange), and second, DAL provides no direct means to deliver megabytes of fixed coefficients to GPU-accelerated kernels. For these reasons, no GPU accelerated DAL version of the application was created.

\subsection{The Platforms}
\label{ssec:platforms}

Table~\ref{table:platforms} shows the platforms that were used to benchmark the PRUNE framework and the Distributed Application Layer, which was used as a reference. The \textit{Carrizo} chip features 4 CPU cores and an integrated graphics processor, a solution that minimizes the data transfer times between the GPU and the CPU cores. \textit{RX} represents a conventional desktop system with a 4-core CPU and a mid-range GPU that is connected to the CPU over a PCI express bus. Finally, \textit{i7} represents a laptop processor with OpenCL drivers that allow accelerating data parallel workloads on the CPU cores. 

\subsection{Experimental Setup}
\label{ssec:exp-setups}

For all use case applications, the PRUNE run-time library was configured to implement triple-buffering of FIFOs ($C=3$, see Eq.~\ref{eq:fifosize}), which provided equal FIFO memory sizes as DAL. DAL applications were implemented to use high-speed \textit{windowed FIFOs} in communication between CPU cores. For each platform and each use case application the execution time was calculated from the average of 8 successive application executions. Before measurements, it was ensured that the processor cores were almost idle by closing unnecessary applications in the OS.

In Motion Detection the OpenCL global work size was set to 76800 for \textit{i7} and to 518400 for \textit{Carrizo} and \textit{RX}. The input data file was a grayscale 300 frame uncompressed sequence ``Jockey'' in resolution 1920$\times$1080, which resulted in a file size of 593 MB.

For the DPD application, OpenCL global work size was equal to the actor input token rate, which was either 65527 or 65536 depending on the actor. The same work size was used for all platforms. The DPD input data stream consisted of 67 megasamples, altogether 537 MBs of size.

For Adaptive DNN the OpenCL work size dimensions were 768$\times$52$\times$52 for L1Conv-Relu, 768$\times$24$\times$24 for L1Conv-Relu, and 24$\times$100 for L3Mtx-Mul. The input sequence consisted of 384 RGB frames that had a file size of 40.5 MB due to their single-precision float data format.

\subsection{Results}
\label{ssec:results}
Table~\ref{table:motion} shows that with the Motion Detection application the PRUNE framework provided 2-9\% higher throughput than DAL on each platform, when no OpenCL acceleration was used, but all processing was done by CPU cores (the columns labeled ``Multicore''). With OpenCL acceleration (``Heterogen.'' columns) the Motion Detection application throughput increased 3$\times$ to 18$\times$ such that PRUNE was 6\% to 11\% faster than DAL, depending on the platform. Under DAL the Intel OpenCL drivers caused an error on the \textit{i7} that prohibited benchmarking.

Multicore-only benchmarking of the Digital Predistortion application revealed 8\%-12\% higher throughput for PRUNE compared to DAL, as Table~\ref{table:dpd} shows. OpenCL acceleration increased the application performance by 3$\times$ to 6$\times$ under PRUNE compared to multicore-only. DAL OpenCL results could not be acquired, as DAL is restricted to static token rates under OpenCL.

The Adaptive Deep Neural Network application revealed (Table~\ref{table:dnn}) larger differences in throughput: under multicore-only, PRUNE was 5\% to 44\% faster than DAL. OpenCL acceleration was also remarkably powerful, as the application speeded up between 14$\times$ to 30$\times$ by OpenCL on PRUNE. As mentioned in Section~\ref{sssec:dnn}, OpenCL results could not be acquired due to several restrictions of DAL.

\section{Discussion and Future Work}
\label{sec:discussion}

The use case applications introduced in Section~\ref{ssec:apps} demonstrate that the PRUNE Model of Computation is expressive enough for describing a wide variety of performance-intensive signal processing applications, which are highly relevant to the video processing, computer vision and wireless communications fields. In contrast, the state-of-the-art framework DAL could not provide means for OpenCL acceleration of Digital Predistortion or Adaptive DNN applications, or means for analyzing the consistency of the application graphs.

The results in Section~\ref{ssec:results} show that the PRUNE runtime framework is also remarkably efficient compared to the state-of-the-art DAL framework: application performance on CPU cores is up to 44\% higher under PRUNE. PRUNE also enables highly efficient simultaneous use of OpenCL devices: performance increases up to 30$\times$ were measured compared to CPU-only.

As future work for PRUNE, interfaces for importing signal processing applications written for the DAL framework and the Open-RVC-CAL Compiler \cite{Yviquel13} will be developed. This will enable the novel capabilities for high performance signal processing in PRUNE to be leveraged by applications.

\section{Conclusion}
\label{sec:conclusion}
In this article the PRUNE Model of Computation and framework has been presented. The dataflow oriented PRUNE Model of Computation is expressive enough for describing signal processing applications with dynamic token rates, yet it provides at the same time decidable
 deadlock freedom and memory boundedness analysis of applications.

The expressiveness of PRUNE has been demonstrated by examples from three signal processing domains: computer vision, video processing and wireless communications. Experimental results have shown that besides decidability, PRUNE also provides more expressiveness and higher performance than the previously published state-of-the-art DAL framework.

\section*{Acknowledgment}

This work was partially funded by the Academy of Finland project 309693 UNICODE and by TEKES --- the Finnish Technology Agency for Innovation (FiDiPro project StreamPro 1846/31/2014).



\bibliographystyle{IEEEtran}
\bibliography{IEEEabrv,refs}
%


%

\begin{IEEEbiography}[{\includegraphics[width=1in,height=1.25in,clip,keepaspectratio]{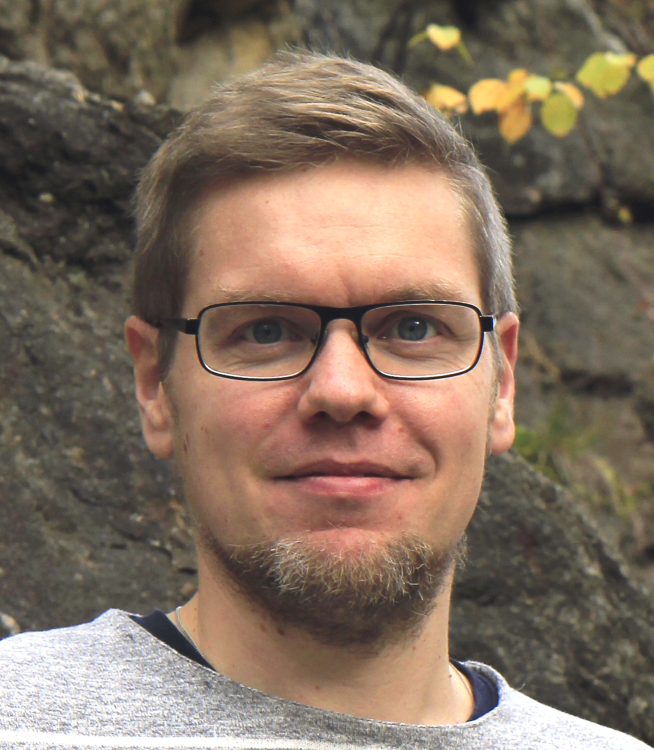}}]{Jani Boutellier} received the M.Sc. and Ph.D. degrees from the University of Oulu, Finland, in 2005 and 2009, respectively. Currently he is an Assistant Professor at the Laboratory of Pervasive Computing of Tampere University of Technology, Finland. In 2007-2008, 2013 he was working as a visiting researcher with the EPFL, Switzerland. His research interests include dataflow programming, signal processing, hardware-software codesign and heterogeneous computing. He is a member of the IEEE Signal Processing Society Design and Implementation of Signal Processing Systems Technical Committee.
\end{IEEEbiography}

\begin{IEEEbiography}[{\includegraphics[width=1in,height=1.25in,clip,keepaspectratio]{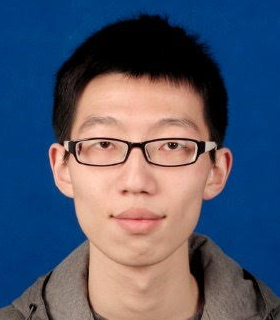}}]{Jiahao Wu} received the bachelor's degree from the University of Electronic Science and Technology of China (UESTC). He joined the Department of Electrical and Computer Engineering at the University of Maryland, College Park as a Ph.D. Student in 2014. His research interests include model-based design for parallel computing, dataflow implementations and synthesis of digital signal processing systems.
\end{IEEEbiography}

\begin{IEEEbiography}[{\includegraphics[width=1in,height=1.25in,clip,keepaspectratio]{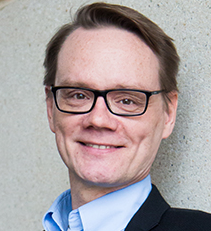}}]{Heikki Huttunen} was born in 1971 and received M.Sc. in mathematics from University of Tampere, Finland, in 1995 and Ph.D. degree in signal processing from Tampere University of Technology, Finland, in 1999. Between 2003-2005 he worked with Visy Oy, developing automatic license plate recognition systems. Since then he held various positions at Tampere University of Technology where he currently is an associate professor leading the machine learning group. His research interests are in the field of machine learning, in particular in deep learning, with research focus on efficient real-time and real-life implementations, and classifier error estimation. Dr. Huttunen has published ca. 100 journal and conference articles and he is an Associate Editor of Journal of Signal Processing Systems. 
\end{IEEEbiography}

\begin{IEEEbiography}[{\includegraphics[width=1in,height=1.25in,clip,keepaspectratio]{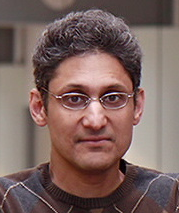}}]{Shuvra S. Bhattacharyya} is a Professor in the Department of Electrical and Computer Engineering at the University of Maryland, College Park. He holds a joint appointment in the University of Maryland Institute for Advanced Computer Studies (UMIACS). He is also a part time visiting professor in the Department of Pervasive Computing at the Tampere University of Technology, Finland, as part of the Finland Distinguished Professor Programme (FiDiPro).  He is an author of six books, and over 250 papers in the areas of signal processing, embedded systems, electronic design automation, wireless communication, and wireless sensor networks. He received the B.S. degree from the University of Wisconsin at Madison, and the Ph.D. degree from the University of California at Berkeley. He has held industrial positions as a Researcher at the Hitachi America Semiconductor Research Laboratory (San Jose, California), and Compiler Developer at Kuck \& Associates (Champaign, Illinois). He has held a visiting research position at the US Air Force Research Laboratory (Rome, New York). He has been a Nokia Distinguished Lecturer (Finland) and Fulbright Specialist (Austria and Germany). He has received the NSF Career Award (USA). He is a Fellow of the IEEE.

\end{IEEEbiography}

\end{document}